\documentclass[copyright]{eptcs}
\providecommand{\event}{17th International Conference on Automata and Formal Languages} 
\usepackage{mathrsfs}

\usepackage{iftex}

\ifpdf
  \usepackage{underscore}         
  \usepackage[T1]{fontenc}     
\else
  \usepackage{breakurl}
\fi

\usepackage{verbatim}
\usepackage{amsmath,amsthm,amssymb,bbm,enumerate,mathrsfs,mathtools}
\usepackage{graphicx}
\usepackage{xcolor}

\title{Algebraic Characterizations for Minors of Finite Graphs via Flow Transformation Monoid Division and Embedding} 
\author{Amena Assem$^{1}$ \qquad Hanna Derets$^{2}$ \qquad Chrystopher L. Nehaniv$^{1,3,2}$
\institute{$^1$Department of Systems Design Engineering\\ $^2$David R. Cheriton School of Computer Science\\
$^3$Department of Electrical \& Computer Engineering\\
University of Waterloo, Ontario N2L 3G1, Canada}
\email{\{a36mahmo, h2derets, chrystopher.nehaniv\}@uwaterloo.ca}
}

\def\titlerunning{Algebraic Characterizations for Minors of Finite Graphs via Flow Transformation Monoids}
\def\authorrunning{A. Assem, H. Derets \& C. L. Nehaniv}

\theoremstyle{plain}
\newtheorem{theorem}{Theorem}[section]

\newtheorem{corollary}[theorem]{Corollary}
\newtheorem{lemma}[theorem]{Lemma}

\newcommand{\flowTM}{{\mathbb S}}
\theoremstyle{definition}
\newtheorem{remark}[theorem]{Remark}
\newtheorem{definition}[theorem]{Definition}

\newtheorem{notation}[theorem]{Notation}

\newcommand{\lifts}{\mathrm{lifts}}
\renewcommand{\Im}{\mathrm{Im}}

\begin{document}
\maketitle

\begin{abstract}
We prove three theorems on the flow monoids of finite graphs. First, we show that a non-empty finite graph $\Gamma=(V,E)$ is connected if and only if its flow monoid contains a constant map on $V$, equivalently, if and only if it contains all constant maps on $V$. Second, we give a new characterization of graph minors in terms of division of flow transformation monoids, together with an algebraic crossing condition that detects edges between the vertex sets being contracted. Third, we strengthen this to an embedded-copy theorem: a graph $M$ is a minor of $\Gamma$ if and only if, subject to analogous crossing conditions, the flow transformation monoid of $M$ is realized as the induced action of a subsemigroup of the ambient flow monoid of $\Gamma$, this subsemigroup being a monoid with a local idempotent identity.
\end{abstract}

\section{Introduction}

Flow monoids and flow transformation monoids associate to a (di)graph $\Gamma=(V,E)$ a transformation monoid generated by the elementary collapsings $\tau_{ab}$ attached to edges.
This construction was introduced by Rhodes in the course of his broad programme connecting automata, semigroups, and transformation actions generated from graph structure \cite{Rhodes2010,DomosiNehaniv2005}. Since then, this construction has been investigated from several perspectives, including structural and algorithmic properties, Green--Rees structure,  regularity, determination of maximal subgroups, Krohn-Rhodes complexity, and graph-theoretic invariants \cite{YangYang,EastGadouleauMitchell,HorvathNehanivPodoski2017}. This viewpoint also connects with a large body of work in computer science and communication networks in which graph topology constrains allowable communication, routing, or computation. Automata-network models place processors or automata at the vertices of a directed graph, with edges specifying permitted communication links; see, for example, \cite{FogelmanRobertTchuente1987,DomosiNehaniv2005}. Routing and interconnection-network questions are central in parallel and distributed computing and continue to be active in modern communication networks, including software-defined networking and graph-based methods for network control and optimization \cite{Leighton1992,DallyTowles2004,MedhiRamasamy2017,KreutzEtAl2015SDNSurvey,Jiang2022GraphDeepLearningCommunicationNetworks}.
Within this broader landscape, the Rhodes flow monoid studied here focuses on the elementary edge-collapsing transformations. This specialization is what makes graph contraction and minor structure visible in the algebra.\footnote{The digraph semigroups considered in \cite{DomosiNehaniv2005} are generated by transformations compatible with a given interconnection digraph, and the associated configuration semigroups describe induced transformations on global network states. Up to the orientation convention for edges, the flow monoid considered here is the submonoid generated only by the elementary single-vertex collapsings. Thus it is a more specialized object than the full digraph-compatible transformation semigroup, but precisely this specialization records contraction data.} 

The paper proves three related equivalences. First, connectedness of a finite non-empty graph $\Gamma$ is equivalent to the presence of constant transformations in its flow monoid $S(\Gamma)$. This algebraic test is then applied to the vertex sets contracted in a minor.  Second, the minor relation $M\leq_{\mathrm{minor}}\Gamma$ is equivalent to a division
$$
\flowTM(M)\preceq \flowTM(\Gamma)
$$
together with algebraic conditions ensuring that the state lift-sets are connected and that edge-generators of $M$ lift to products containing genuine crossing collapsings in $\Gamma$. Third, the same minor relation is equivalent to the existence of a faithful local embedded copy of $\flowTM(M)$ inside $\flowTM(\Gamma)$: one can choose representatives $R\subseteq V(\Gamma)$ and a subsemigroup $T\leq S(\Gamma)$ which is itself a monoid, with local identity an idempotent $e=e^2\in S(\Gamma)$, such that
$$
(R,T)\leq \flowTM(\Gamma)
\qquad\text{and}\qquad
(R,T)\cong \flowTM(M).
$$
The converse embedded-copy statement requires connected fibres of $e$ and crossing
elementary collapsings between the corresponding fibres. In the connected case, $e$ may
be chosen with $R=\operatorname{Im}(e)$, so that the rank of $e$ is exactly
$|V(M)|$.

The next section proves the connectedness--constant-map characterization
(Theorem~\ref{connected-constant}). Sections~\ref{sec:division} and~\ref{sec:embedding}
prove the division and embedded-copy characterizations in
Theorems~\ref{thm:minor-division} and~\ref{thm:minor-iff-embedded-copy-idem},
respectively. Theorem~\ref{thm:subsemigroup-copy} gives the forward construction used
in the embedded-copy characterization.

\section{Preliminaries and Connectivity}

A (simple) graph $\Gamma=(V,E)$ is defined to be  a vertex set $V$ and set of  undirected edges $E$, i.e., $E$ is a set of  pairs $\{x,y\}$, $x,y\in V$, $x\neq y$.  We regard undirected graphs as directed graphs for which $(x,y)\in E$ implies $(y, x)\in E$. 
In this paper, all graphs considered are finite.

\begin{definition} (\emph{Flow Monoid and Flow Transformation Monoid of a Graph}).
Given a digraph $\Gamma=(V,E)$, the \emph{flow monoid} of $\Gamma$ is the monoid $S(\Gamma)$ generated by elementary collapsings corresponding to directed edges $(a,b)$ of $\Gamma$. That is,
$$S(\Gamma)=\langle\tau_{ab}: (a,b)\in E \rangle,$$ where $\tau_{ab}: V \rightarrow V$ is a mapping called the \emph{elementary collapsing} of $(a,b)$ defined by $\tau_{ab}(a)=b$, and $\tau_{ab}(x) = x$ for all $x\neq a$.

Note that $|\Im(\tau_{ab})|=n-1$ if $|V|=n$. Also, $\tau_{ab}\tau_{ab}=\tau_{ab}$, i.e., each elementary collapsing is idempotent.
 A mapping 
$s:V\rightarrow V$ is in the flow monoid if and only if $s$ can be expressed as the product of a finite number of generators:
$$s= \tau_{x_1y_1} \dots \tau_{x_ky_k},$$ with
$k\geq 0$ and  each $(x_i, y_i)\in E$ for $1\leq i \leq k$.
Then the {\it flow transformation monoid}  $\flowTM(\Gamma)$ is $(V(\Gamma), S(\Gamma))$, with the flow monoid $S(\Gamma)$ acting on the right of the vertex set $V=V(\Gamma)$.
\end{definition}

The flow monoid always includes the identity map $1_V: V \rightarrow V$, formally identified with the product of $k=0$ generators.     By convention, we apply functions on the right of the arguments, so  $fg$ denotes the function applied by first applying $f$ and then applying $g$. Thus, we write $v \cdot fg= (v\cdot f) \cdot g, $
for all $v\in V$, $f,g\in S(\Gamma)$.  The flow monoid acts on the right of its vertex set yielding a transformation monoid.

In automata-theoretic terms, for an undirected graph $\Gamma=(V,E)$, take the input alphabet
$$
\Sigma_\Gamma=\{\tau_{ab}: a, b \in V, \{a,b\}\in E\}.
$$
One may regard $V$ as the state set of a deterministic finite automaton, with each input letter $\tau_{ab}$ acting by the corresponding elementary collapsing. The transition monoid of this automaton is precisely the flow monoid $S(\Gamma)$.
In the terminology of synchronizing automata \cite{Volkov2008},
Theorem~\ref{connected-constant} says that this automaton is synchronizing
if and only if $\Gamma$ is connected.\footnote{Indeed, if $\Gamma$ is connected and $n=|V(\Gamma)|$, the associated
automaton has reset threshold exactly $n-1$. Choose a spanning tree of $\Gamma$ rooted at $r$, and, for each non-root vertex $v$, 
let $p(v)$ be the neighbour of $v$ on the unique path in the spanning tree from $v$ to $r$. Applying the $n-1$ elementary collapsings $\tau_{v,p(v)}$, with the vertices ordered by nonincreasing distance from $r$, maps every vertex to $r$. Conversely, appending one input letter can decrease the rank of the
transformation represented by a word by at most one, so any reset word has
length at least $n-1$.}

\begin{notation}\label{not:functions-and-fibres}
For sets $X$ and $Y$, we write
$$
X^Y:=\{f:Y\to X\}
$$
for the set of all functions from $Y$ to $X$. Thus $X^X$ is the full transformation monoid on $X$.
More generally, if a semigroup or monoid $S$ acts by maps on state set $X$, the action is \emph{faithful} if distinct elements of $S$ induce distinct transformations of $X$; equivalently, for all $s,t\in S$, $$x\cdot s=x\cdot t\quad\text{for all }x\in X \quad\Rightarrow\quad
s=t.$$ 
In this case, we can identify $S$ with a subsemigroup of $X^X$, and we write $(X,S)$
for the corresponding faithful transformation semigroup or transformation monoid
(if it contains the identity map $1_X$ on $X$).

For a map $f:X\to Y$ and an element $y\in Y$, the \emph{fibre} of $f$ over $y$ is the preimage
$f^{-1}(y)$.
\end{notation}

The following elementary but key observation from Rhodes's book was formulated by C. L. Nehaniv and J. Rhodes.

\begin{lemma}\cite[Lemma~6.51b.]{Rhodes2010}
If $\tau_{ab}=\tau_{x_1x_2} \cdots \tau_{x_{2k-1}x_{2k}}$, then  $\tau_{x_1 x_2}$ is either $\tau_{ab}$ or $\tau_{ba}$.
\end{lemma}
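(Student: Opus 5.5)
We argue by considering the first factor $\tau_{x_1x_2}$ and comparing what it does with what $\tau_{ab}$ does. Write $t=\tau_{x_3x_4}\cdots\tau_{x_{2k-1}x_{2k}}$, so that $\tau_{ab}=\tau_{x_1x_2}\,t$. If $k=0$ there is no first factor, and this case is excluded since $\tau_{ab}\neq 1_V$. Recall that maps act on the right, so $\tau_{x_1x_2}$ is applied first.

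\textbf{Step 1: the rank of $t$.} Since $\tau_{ab}=\tau_{x_1x_2}t$, the image $\Im(\tau_{ab})$ is contained in $\Im(t)$. Hence $t$ has rank at least $n-1$. If $t$ is a permutation, it must be $1_V$. This is because $t$ is a product of elementary collapsings, and any product containing at least one collapsing has rank at most $n-1$. In that case $\tau_{ab}=\tau_{x_1x_2}$, which is an instance of the conclusion (the $\tau_{ab}$ alternative). So we may assume $t$ has rank exactly $n-1$.

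\textbf{Step 2: the first factor merges $a$ and $b$.} The key observation is that $\tau_{x_1x_2}$ identifies $x_1$ with $x_2$, since both are sent to $x_2$. Consequently $x_1\cdot\tau_{ab}=x_2\cdot\tau_{ab}$. The only pair of distinct vertices that $\tau_{ab}$ identifies is $\{a,b\}$, because it is injective on $V\setminus\{a\}$ and only $a$ and $b$ are sent to $b$. Since $x_1\neq x_2$, this forces $\{x_1,x_2\}=\{a,b\}$.

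\textbf{Step 3: conclusion.} Step 2 gives $\tau_{x_1x_2}\in\{\tau_{ab},\tau_{ba}\}$, which is the statement of the lemma. The step needing the most care is Step 2's justification that the kernel of $\tau_{ab}$ identifies only $a$ and $b$. Once that is noted, the argument is immediate.
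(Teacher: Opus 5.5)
Your proof is correct and, in Step 2, is exactly the paper's argument: the right-hand side identifies $x_1$ and $x_2$, and the only pair of distinct points identified by $\tau_{ab}$ is $\{a,b\}$, so $\{x_1,x_2\}=\{a,b\}$. Your Step 1 rank analysis is valid but never used, so you can drop it.
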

\begin{proof}
Both sides of the equation map $a$ and $b$ to the same point. The inverse image of any other point is a singleton, since it is fixed by $\tau_{ab}$. Also, the righthand side maps $x_1$ and $x_2$ to the same point. Therefore, $\{a,b\}=\{x_1,x_2\}$.
\end{proof}

From this lemma it follows immediately that if $\tau_{ab}\in S(\Gamma)$ for some digraph $\Gamma$, then either $(a,b)\in E$ or $(b,a)\in E$. In particular if $\Gamma$ is an undirected graph, then both edges are in $\Gamma$. Then \cite[Fact~6.51a.]{Rhodes2010} follows easily from this, which we state here as a theorem for completeness. 

\begin{theorem}\cite[Fact~6.51a.]{Rhodes2010}
If $\Gamma=(V,E)$ is a graph, then $\Gamma$ is uniquely determined by the flow transformation monoid $\flowTM(\Gamma)=(V,S(\Gamma))$.
\end{theorem}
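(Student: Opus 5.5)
The plan is to recover the edge set $E$ directly from the pair $(V,S(\Gamma))$. We take ``graph'' here to mean a simple undirected graph, viewed as a symmetric digraph, as in the paper's conventions. The claim to prove is
$$
E=\bigl\{\{a,b\}: a\neq b,\ \tau_{ab}\in S(\Gamma)\bigr\}.
$$
The right-hand side is defined purely in terms of the vertex set $V$ and the transformation monoid $S(\Gamma)\subseteq V^V$. So once this equality is established, any two graphs on the same vertex set with the same flow monoid have the same edge set, and hence are equal.

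The inclusion $\subseteq$ is immediate. If $\{a,b\}\in E$, then $(a,b)\in E$ when $\Gamma$ is regarded as a symmetric digraph. Hence $\tau_{ab}$ is one of the generators of $S(\Gamma)$, and so it lies in $S(\Gamma)$.

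For $\supseteq$, suppose $a\neq b$ and $\tau_{ab}\in S(\Gamma)$. Then $\tau_{ab}$ is a finite product of generators $\tau_{x_1x_2}\cdots\tau_{x_{2k-1}x_{2k}}$, with each $(x_{2i-1},x_{2i})\in E$. The product is non-empty ($k\geq 1$), because $\tau_{ab}\neq 1_V$ when $a\neq b$. By the Nehaniv--Rhodes lemma (Lemma 6.51b), the first factor $\tau_{x_1x_2}$ is either $\tau_{ab}$ or $\tau_{ba}$, so $\{x_1,x_2\}=\{a,b\}$. Since $(x_1,x_2)\in E$, we get $\{a,b\}\in E$; this is exactly the remark following the lemma.

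The only point needing care is that the lemma requires a non-empty product, which is ensured by $a\neq b$ as noted above. Everything else is bookkeeping, so I expect no real obstacle.
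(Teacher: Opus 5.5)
Your proposal is correct and follows essentially the same route as the paper: both recover $E$ as the set of pairs $\{a,b\}$ with $a\neq b$ and $\tau_{ab}\in S(\Gamma)$, using the Nehaniv--Rhodes Lemma~6.51b for the nontrivial inclusion. Your observation that the factorization must be non-empty because $\tau_{ab}\neq 1_V$ is a detail the paper leaves implicit, and it is right.
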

\begin{proof}
This follows easily from the fact that $(a,b), (b,a)\in E$ if and only if $\tau_{ab}\in S(\Gamma)$ for undirected graphs.
\end{proof}

\begin{theorem}[Connectedness--constant-map characterization]
\label{connected-constant}

Let $\Gamma=(V,E)$ be a non-empty finite graph, i.e., $0 < |V| < \infty$.
Then the following graph-theoretic condition is equivalent to the following algebraic conditions:

\begin{itemize}
\item[(G)] $\Gamma$ is connected.
\item[(A1)] $S(\Gamma)$ contains a constant map on $V$.
\item[(A2)] $S(\Gamma)$ contains all constant maps on $V$.
\end{itemize}

\end{theorem}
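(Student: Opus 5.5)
I will prove the cycle of implications (G)$\Rightarrow$(A2)$\Rightarrow$(A1)$\Rightarrow$(G). The middle implication is trivial, since $V\neq\emptyset$ guarantees that at least one constant map exists. So the work lies in the other two.

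For (G)$\Rightarrow$(A2), fix an arbitrary target vertex $r\in V$ and build a product of elementary collapsings sending every vertex to $r$, following the construction sketched in the footnote. Since $\Gamma$ is connected, choose a spanning tree rooted at $r$ (for instance a BFS tree). For each $v\neq r$, let $p(v)$ be its parent, so $\{v,p(v)\}\in E$ and hence $\tau_{v\,p(v)}$ is a generator. List the non-root vertices $v_1,\dots,v_{n-1}$ in order of nonincreasing depth and set $s=\tau_{v_1p(v_1)}\cdots\tau_{v_{n-1}p(v_{n-1})}$.

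I then verify by induction along this product (maps act on the right) the following claim: after applying the first $i$ factors, the image of every vertex is a vertex $w$ such that $w\in\{v_{i+1},\dots,v_{n-1}\}\cup\{r\}$, and $w$ is an ancestor of the original vertex. Equivalently, once a vertex's current image has depth $d$, all later collapsings that could move it have depth at most $d$. The key point is that $\tau_{v_ip(v_i)}$ moves only the point $v_i$, and moves it to its parent. The parent has strictly smaller depth, so its own collapsing occurs later in the list, or it is $r$.

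Hence every point climbs to $r$, and $s$ is the constant map onto $r$. Since $r$ was arbitrary, all constant maps lie in $S(\Gamma)$. The case $|V|=1$ is covered by the empty product $1_V$.

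For (A1)$\Rightarrow$(G), I use an invariance argument. Suppose $\Gamma$ is disconnected and let $C$ be a connected component with $C\neq V$. Each generator $\tau_{ab}$ has $\{a,b\}\in E$, so $a,b$ lie in the same component. Consequently $\tau_{ab}$ maps $C$ into $C$ and $V\setminus C$ into $V\setminus C$. This invariance passes to all products, so every $s\in S(\Gamma)$ maps $C$ into $C$ and its complement into its complement. Both sets are non-empty, so no element of $S(\Gamma)$ can be constant.

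The only step needing care is the ordering argument in (G)$\Rightarrow$(A2), i.e.\ making sure no vertex's image is ``left behind'' below a collapsing that has already been applied. The depth-ordered induction above handles this.
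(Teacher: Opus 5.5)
Your proof is correct, but both nontrivial directions take a different route from the paper's.

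For (G)$\Rightarrow$(A2), the paper does not use a depth-ordered spanning tree; that construction appears only in its footnote on the reset threshold. Instead it takes any walk $v_0,\dots,v_k$ that visits every vertex and ends at the target. It checks that $\tau_{v_0v_1}\cdots\tau_{v_{k-1}v_k}$ is constant by observing that a vertex $x$ whose first occurrence is $x=v_i$ is untouched by the first $i$ factors and is then carried along $v_{i+1},\dots,v_k$. That verification is a one-liner. Your invariant needs the ordering argument: after $i$ factors every image avoids $v_1,\dots,v_i$, because $p(v_i)$ has strictly smaller depth than every earlier-listed vertex. In exchange, your word has the optimal length $n-1$.

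For the converse, the paper argues directly. It takes a shortest factorization of a constant map $c_v$ and traces the trajectory of each vertex $x$, reading off the edges $(x_i,y_i)$ that actually move it; this gives a walk from $x$ to $v$ in $\Gamma$. Your contrapositive invariance argument instead shows that each generator, hence each element of $S(\Gamma)$, maps every connected component into itself. It is shorter, and it avoids the leastness-of-$k$ bookkeeping, which the trajectory argument does not really need anyway. It also yields more: every element of $S(\Gamma)$ has rank at least the number of components of $\Gamma$, and applying your tree construction in each component attains this bound. The paper's argument, by contrast, is constructive: it extracts explicit connecting walks from any given factorization of a constant map.
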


\begin{proof}

If $|V|=1$, then $S(\Gamma)=\{1_V\}$, and $1_V$ is the unique constant map on $V$.
Thus (G), (A1), and (A2) all hold.

Now suppose $|V|\geq 2$. If $\Gamma$ is connected, then since $\Gamma$ is finite there is a walk containing all of its vertices, and we can choose the walk to end with any vertex we want, say $v_0, v_1, \cdots, v_k$. Consider the product of the elementary collapsings corresponding to the edges of the walk $\tau_{v_0v_1} \tau_{v_1v_2} \cdots \tau_{v_{k-1}v_k} \in S(\Gamma)$. Denote this product by $c_{v_k}$. It is easy to verify that $c_{v_k}$ is a constant map which maps every $x\in V$ to $v_k$. Indeed, if $x=v_i$ is the first occurrence of $x$ in the walk, then the product sends
$x$ successively to $v_{i+1},v_{i+2},\ldots,v_k$.
Since we could have chosen $v_k$ to be any vertex of $\Gamma$, this shows that all the constant maps on $V$ are contained in $S(\Gamma)$.

\noindent $(A1) \rightarrow (G)$ and $(A2) \rightarrow (G)$: Take any constant map $c_v$ in $S(\Gamma)$ that maps all the vertices of $\Gamma$ to some vertex $v \in  V$. Consider a decomposition into elementary collapsings $c_v= \tau_{x_1y_1}\tau_{x_2y_2} \cdots \tau_{x_ky_k}$ with $k$ least. Since  $c_v$ must move all the vertices in $V\setminus \{v\}$ to $v$, it follows that $k\geq |V|-1 \geq 1$.  Then $y_k=v$, for if not,  then $\tau_{x_ky_k}$ is redundant in the product and we may remove it, contradicting leastness of $k$.

The elementary collapsings $\tau_{x_i y_i}$ in the product correspond to edges $(x_i,y_i)$ in $\Gamma$, and since each vertex is mapped to $v$, then it follows that there is a path from each vertex $x\in V\setminus\{v\}$ to $v$ given by following exactly those edges $(x_i,y_i)$ for the successive generators $\tau_{x_i y_i}$ in the product that actually move $x$ along its trajectory toward $v$ as we apply them from left to right. Thus the graph is connected.
\end{proof}

\begin{corollary}[Local constants on connected induced subgraphs]
\label{cor:local-constant}
Let $A\subseteq V(\Gamma)$ be nonempty and suppose that $\Gamma[A]$ is connected.
For every $r\in A$, there is an idempotent element $c_{A,r}\in S(\Gamma)$ such that
$$
a\cdot c_{A,r}=r\quad (a\in A),
\qquad
v\cdot c_{A,r}=v\quad (v\notin A).
$$
If $A=\{r\}$, then $c_{A,r}=1_{V(\Gamma)}$.
\end{corollary}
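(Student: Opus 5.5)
The plan is to reuse the walk construction from the proof of Theorem~\ref{connected-constant}, but inside the induced subgraph $\Gamma[A]$, and then read off the action on $V(\Gamma)$ directly. First I dispose of the degenerate case: if $A=\{r\}$, the empty product $1_{V(\Gamma)}$ lies in $S(\Gamma)$. It sends $r$ to $r$, fixes every other vertex, and is idempotent.

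Now suppose $|A|\ge 2$. Since $\Gamma[A]$ is connected and finite, I choose a walk $v_0,v_1,\dots,v_k$ in $\Gamma[A]$ that visits every vertex of $A$ and ends at $v_k=r$. This is exactly the choice made in the proof of Theorem~\ref{connected-constant}, applied to $\Gamma[A]$. Every consecutive pair $\{v_{i-1},v_i\}$ is an edge of $\Gamma[A]$, hence an edge of $\Gamma$, so $(v_{i-1},v_i)\in E$ in both orientations. I then set
$$
c_{A,r}:=\tau_{v_0v_1}\tau_{v_1v_2}\cdots\tau_{v_{k-1}v_k}\in S(\Gamma).
$$

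There are two verification steps.

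\emph{Vertices outside $A$.} Every generator $\tau_{v_{i-1}v_i}$ moves only $v_{i-1}\in A$ and fixes all other vertices, and all of these maps send $A$ into $A$. So any $v\notin A$ is fixed by each factor, and therefore $v\cdot c_{A,r}=v$.

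\emph{Vertices in $A$.} Let $a\in A$, and let $v_i$ be the first occurrence of $a$ in the walk. The factors before position $i$ move only vertices $v_j$ with $j<i$, all different from $a$, so they fix $a$. The remaining factors then carry $a$ successively to $v_{i+1},\dots,v_k=r$, exactly as in the proof of Theorem~\ref{connected-constant}. This argument only tracks points inside $A$, since the factors preserve $A$.

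Idempotence follows from these two computations. The map $c_{A,r}$ sends $A$ to $r\in A$ and fixes the complement of $A$. Applying it twice therefore sends $a\in A$ to $r\cdot c_{A,r}=r$ and fixes every $v\notin A$, so $c_{A,r}^2=c_{A,r}$.

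The only step needing care is the claim that the restriction to $A$ behaves like the constant-map construction even though $c_{A,r}$ acts on all of $V(\Gamma)$. The key observation is that the chosen generators move only vertices of $A$ and keep them inside $A$. With that, the rest is routine.
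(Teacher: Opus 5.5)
Your proposal is correct and follows the paper's proof. You handle $A=\{r\}$ with the empty product, and otherwise take a product of elementary collapsings along edges of $\Gamma[A]$ that sends $A$ to $r$, observe that it fixes every vertex outside $A$ because all factors have both endpoints in $A$, and then deduce idempotence; the only difference is that you write out the walk construction from the proof of Theorem~\ref{connected-constant} instead of citing the theorem directly.
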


\begin{proof}
If $A=\{r\}$, take $c_{A,r}=1_{V(\Gamma)}$, the empty product of generators.  Otherwise, apply
Theorem~\ref{connected-constant} inside the connected induced subgraph $\Gamma[A]$,
and choose a product of elementary collapsings in $\Gamma[A]$ sending every vertex of
$A$ to $r$. The same product, regarded as an element of $S(\Gamma)$, fixes every vertex
outside $A$, since all its elementary collapsings have source and target in $A$. Since $c_{A,r}$ fixes $r$ and maps every vertex of $A$ to $r$, while fixing every vertex outside $A$, it is idempotent. 
\end{proof}

This theorem and its local version will be used to connect graph minors with division
of transformation monoids, since the contracted parts in a minor must be connected.
Before establishing this connection, we introduce a few more definitions.

\begin{definition}(Division of Transformation Semigroups)\label{def:division}
A transformation semigroup $(X,S)$ \emph{divides} or \emph{is emulated by} another transformation semigroup $(X',S')$, written $(X,S) \preceq (X',S')$ if there exists a subset $Y\subseteq X'$ and a subsemigroup $T$ of $S'$ such that $Y \cdot T\subseteq Y$, and there is a surjective function $\psi_1:Y \twoheadrightarrow X$ and a surjective homomorphism $\psi_2: T \twoheadrightarrow S$ such that $\psi_1(y\cdot t)=\psi_1(y)\cdot \psi_2(t)$ for every $y\in Y$ and $t\in T$.
\end{definition}

\begin{definition}[Lifts]\label{def:lifts}
Suppose $(X,S)\preceq (X',S')$ is witnessed by a subsemigroup
$T\le S'$, a surjective state map
$
\psi_1:Y\twoheadrightarrow X,
$
and a surjective homomorphism
$
\psi_2:T\twoheadrightarrow S,
$
where $Y\subseteq X'$.
For $x\in X$, the \emph{state lift-set} of $x$ is the fibre
$
\psi_1^{-1}(x)\subseteq Y
$
of the state map over $x$. Its elements are called \emph{state lifts} of $x$, and we write
$
\lifts(x):=\psi_1^{-1}(x).
$
For $s\in S$, an element $\hat{s}\in T$ is called a \emph{semigroup lift} of $s$
if
$
\psi_2(\hat{s})=s.
$
Equivalently, the semigroup lift-set of $s$ is the fibre
$
\psi_2^{-1}(s)\subseteq T.
$
\end{definition}

\begin{definition}\label{def:Gamma-v}
If $\flowTM(M) \preceq \flowTM(\Gamma)$ for graphs $M$ and $\Gamma$, then for each $v\in V(M)$ we define $\Gamma_v$ to be the subgraph of $\Gamma$ induced by the lifts of $v$, i.e.
$$\Gamma_v:=(\lifts(v), \{(v_1,v_2)\in E(\Gamma): v_1, v_2 \in \lifts(v)\}).$$
\end{definition}

The following lemma, due to Dennis Allen, Jr.\ is very helpful in proving division between transformation semigroups.

\begin{lemma}[Allen]\label{lem:Allen-division}\cite[Proposition~1.10]{DomosiNehaniv2005}
Let $(X,S)$ and $(X',S')$ be transformation semigroups. To show that $(X,S)$ divides $(X',S')$, it suffices to choose one or more $\tilde{x} \in X'$ as lifts for each $x\in X$ and one or more $\tilde{s}\in S'$ as lifts for each $s$ in a generator set for $S$, such that the following hold:
\begin{itemize}
\item[(1)] Each member of $X'$ (resp., $S'$) is a lift of at most one element of $X$ (resp., $S$).
\item[(2)] If $\tilde{x}$ is a lift of $x$ and $\tilde{s}$ is a lift of $s$, then $\tilde{x}\cdot \tilde{s}$ is some lift of $x\cdot s$.
\end{itemize}
\end{lemma}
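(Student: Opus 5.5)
We need to prove Allen's lemma. From the lifts, build the subsemigroup $T$ of $S'$ generated by all chosen lifts $\tilde s$ of generators $s$, and take $Y\subseteq X'$ to be the set of all chosen state lifts. Define $\psi_1:Y\to X$ by sending each $\tilde x$ to the unique $x$ it lifts; condition (1) makes this well defined, and it is surjective because every $x$ has at least one lift. By condition (2), applied one generator lift at a time, $Y\cdot T\subseteq Y$.

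The homomorphism $\psi_2$ cannot be defined naively on $T$, because an element of $T$ may be written as a word in generator lifts in several ways. So I would define it through the action. Condition (2), iterated along a word, shows that for any word $\tilde s_1\cdots\tilde s_k$ of generator lifts, with $\tilde s_i$ a lift of $s_i$, we have
$$\psi_1\bigl(y\cdot \tilde s_1\cdots\tilde s_k\bigr)=\psi_1(y)\cdot s_1\cdots s_k \quad\text{for all } y\in Y.$$
Now take two words representing the same element $t\in T$. The right-hand sides agree at every $x=\psi_1(y)$, and $\psi_1$ is surjective, so the two products $s_1\cdots s_k$ induce the same transformation of $X$. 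Since $(X,S)$ is faithful, they are the same element of $S$. Hence $\psi_2(t):=s_1\cdots s_k$ is well defined.

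It remains to check the required properties of $\psi_2$.
\begin{itemize}
\item It is a homomorphism, by concatenation of words.
\item It is surjective, since every element of $S$ is a product of generators and each generator has a lift.
\item The compatibility identity $\psi_1(y\cdot t)=\psi_1(y)\cdot\psi_2(t)$ is exactly the displayed formula.
\end{itemize}

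The main obstacle is the well-definedness of $\psi_2$, which rests on faithfulness and on the surjectivity of $\psi_1$. Note also a subtle point: the clause of (1) concerning $S'$ (each element of $S'$ lifts at most one generator) is not needed for this argument, since faithfulness already forces consistency. If $S$ is a monoid and one wants $T$ to be a monoid, the identity case is handled by lifting $1$ to $1_{X'}$.
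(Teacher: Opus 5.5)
Your argument is correct and complete. Building $T$ from the generator lifts, computing $\psi_2$ on words, and using surjectivity of $\psi_1$ plus faithfulness of $(X,S)$ to show independence of the word is the standard proof. Your remark that the $S'$-clause of (1) is forced by the $X'$-clause, (2), and faithfulness is also right.

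The paper does not prove the lemma itself; it cites D\"om\"osi--Nehaniv. The mechanism it invokes when applying the lemma in the proof of Theorem~\ref{thm:minor-division}, namely defining $u\cdot\psi_2(t):=\psi_1(y\cdot t)$ and appealing to induction on words, is the same as yours viewed from the other side.
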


\section{Graph Minors and Flow Transformation Monoid Division}\label{sec:division}

We use the standard minor-model characterization of the graph-minor relation, equivalent to obtaining a minor by vertex deletions, edge deletions, and edge contractions \cite{Diestel2025}.

\begin{definition}[Graph Minor]\label{def:graph-minor}
A graph $M$ is a \emph{minor} of a graph $\Gamma$ if there are pairwise disjoint sets of vertices $A_v\subseteq V(\Gamma)$, with $A_v\neq\emptyset$ for each $v\in V(M)$, such that:
\begin{itemize}
\item the subgraph of $\Gamma$ induced by $A_v$, namely $\Gamma[A_v]$, is connected; and
\item for every edge $\{u,v\}\in E(M)$, there are vertices $u'\in A_u$ and $v'\in A_v$ such that $\{u',v'\}\in E(\Gamma)$.
\end{itemize}

Such a family of sets $A_v$, together with the required edges between them, is called a \emph{realization} (or \emph{model}) of $M$ as a minor of $\Gamma$. The sets $A_v$ are called the \emph{contraction classes} of this realization.
\end{definition}

Thought of in another way, a minor $M$ is obtained from $\Gamma$ by contracting each set $A_v$, $v\in V(M)$. The contraction can be done one by one, contracting (or collapsing) one edge at a time in the connected graph $\Gamma[A_v]$ until it reduces to a single vertex. Then delete any additional vertices or edges, except the ones corresponding to edges in $M$.

\begin{remark}
Since for each $v\in V(M)$, the subgraph  $\Gamma[A_v]$ of $\Gamma$ induced by $A_v$  is non-empty and connected,  by Theorem \ref{connected-constant}, $S(\Gamma[A_v])$ contains a constant map on $A_v$.
\end{remark}

\begin{theorem}[Minor characterization by division with crossing factors]\label{thm:minor-division}
Let $M$ and $\Gamma$ be simple graphs.
Then the graph-theoretic condition that $M$ is a minor of $\Gamma$ is equivalent to the following algebraic condition:
there is a division with data
$(Y,T,\psi_1,\psi_2)$
for
$\flowTM(M)\preceq \flowTM(\Gamma)$
in the sense of Definition~\ref{def:division}, with lift sets
$$
\lifts(v):=\psi_1^{-1}(v)\qquad (v\in V(M)),
$$
such that:
\begin{itemize}
\item[(A1)] for every $v\in V(M)$, the flow monoid $S(\Gamma_v)$ contains a constant transformation,
where $\Gamma_v$ is the subgraph of $\Gamma$ induced by $\lifts(v)$; and
\item[(A2)] for each directed edge $(u,v)$ of $M$, there exists an element
$\tau^*_{uv}\in T$ with $\psi_2(\tau^*_{uv})=\tau_{uv}$ such that some factorization of
$\tau^*_{uv}$ into elementary collapsings contains a factor $\tau_{ab}$ with
$$
a\in \lifts(u),\qquad b\in \lifts(v).
$$
In particular, $\tau_{ab}$ corresponds to an edge of $\Gamma$ with one endpoint in
$\lifts(u)$ and the other in $\lifts(v)$.
\end{itemize}
\end{theorem}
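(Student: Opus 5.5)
We prove the two directions separately. Both directions work with $Y$ the union of the contraction classes, $\psi_1$ the map sending each class to its vertex of $M$, and lift sets equal to the contraction classes.

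\emph{Minor $\Rightarrow$ division.} Fix a realization $\{A_v\}_{v\in V(M)}$ of $M$ as in Definition~\ref{def:graph-minor}. Put $Y=\bigcup_v A_v$ and let $\psi_1(y)=v$ for $y\in A_v$. Then $\lifts(v)=A_v$ and $\Gamma_v=\Gamma[A_v]$, so (A1) follows from Theorem~\ref{connected-constant}.

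For each directed edge $(u,v)$ of $M$, choose an edge $\{a,b\}$ of $\Gamma$ with $a\in A_u$, $b\in A_v$. Using Corollary~\ref{cor:local-constant}, define
$$\tau^*_{uv}:=c_{A_u,a}\,\tau_{ab}\,c_{A_v,b}.$$
Check its action: every point of $A_u$ goes to $a$, then to $b$, then stays at $b$. Every point of $A_v$ is fixed by the first two factors (note $a\notin A_v$, and $c_{A_u,a}$ fixes $A_v$) and then goes to $b$. Every point of any other $A_w$ is fixed by all three factors. So $\tau^*_{uv}$ preserves $Y$ and satisfies $\psi_1(y\cdot\tau^*_{uv})=\psi_1(y)\cdot\tau_{uv}$. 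Its displayed factorization contains $\tau_{ab}$, which gives (A2).

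Let $T$ be the subsemigroup generated by the elements $\tau^*_{uv}$, together with $1_V$ so that the identity of $S(M)$ has a preimage. Each element of $T$ maps $Y$ into $Y$ and is compatible with $\psi_1$. Define $\psi_2(t)$ to be the unique transformation $\bar t$ of $V(M)$ with $\psi_1(y\cdot t)=\psi_1(y)\cdot\bar t$.
\begin{itemize}
\item $\bar t$ is well defined because the compatibility is preserved under products and $\psi_1$ is surjective.
\item $\psi_2$ is a homomorphism.
\item Its image is generated by the $\tau_{uv}$, so $\psi_2$ maps $T$ onto $S(M)$.
\end{itemize}
This is the content of Lemma~\ref{lem:Allen-division}, applied with these lifts.

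\emph{Division $\Rightarrow$ minor.} Set $A_v:=\lifts(v)$.
\begin{itemize}
\item These sets are nonempty because $\psi_1$ is surjective, and pairwise disjoint because they are fibres of $\psi_1$.
\item By (A1) and Theorem~\ref{connected-constant} ((A1)$\Rightarrow$(G)) applied to $\Gamma_v=\Gamma[A_v]$, each $\Gamma[A_v]$ is connected. Theorem~\ref{connected-constant} requires a nonempty vertex set, which holds here.
\item For each edge $\{u,v\}$ of $M$, condition (A2) gives a factor $\tau_{ab}$ of some factorization of $\tau^*_{uv}$ in $S(\Gamma)$ with $a\in A_u$, $b\in A_v$. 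Any factor in a factorization over the generators is a generator $\tau_{ab}$ with $(a,b)\in E(\Gamma)$, so $\{a,b\}\in E(\Gamma)$.
\end{itemize}
Thus $\{A_v\}$ is a realization of $M$ as a minor of $\Gamma$.

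\emph{Main obstacle.} The delicate step is the forward direction: showing that $T$ maps onto $S(M)$ via a genuine well-defined homomorphism, and that the chosen lifts of distinct generators do not clash. Condition (1) of Allen's lemma concerns elements of $S(\Gamma)$ lifting several generators. I would sidestep this by defining $\psi_2$ directly through the induced action on the fibres of $\psi_1$. Any $t\in T$ determines $\bar t$ uniquely, and the product $t\,t'$ induces $\bar t\,\bar t'$, so a generator lift that happens to coincide with another lift causes no ambiguity. I would also check the edge case $a\in A_u$, $b\in A_v$ with $u\neq v$, which holds since $M$ is simple, so that the displayed computation of $\tau^*_{uv}$ on $A_u$ and $A_v$ is correct.
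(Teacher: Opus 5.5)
Your proof is correct and follows the paper's argument in both directions. You use the same $Y$, $\psi_1$ and contraction classes as lift sets, and you define $\psi_2$ through the induced action on fibres, which is exactly how the paper defines it before checking compatibility on generator lifts via Allen's lemma. The converse is likewise read off from (A1) via Theorem~\ref{connected-constant} and from the crossing generator in (A2). The only difference is cosmetic: your lift $c_{A_u,a}\,\tau_{ab}\,c_{A_v,b}$ aims the local constants directly at the endpoints of the crossing edge, which removes the paper's basepoints and routing words $p_{u_*u'}$, $p_{v'v_*}$.
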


\begin{proof}
\textup{(G)$\Rightarrow$(A).}
Assume that $M$ is a minor of $\Gamma$.
By Definition~\ref{def:graph-minor}, there exist pairwise disjoint vertex sets $A_v\subseteq V(\Gamma)$ for $v\in V(M)$ such that
$\Gamma[A_v]$ is connected for each $v$, and for each edge $(u,v)\in E(M)$ there exist $u'\in A_u$ and $v'\in A_v$
with $\{u',v'\}\in E(\Gamma)$.

Define
$$
Y:=\bigcup_{v\in V(M)}A_v
\qquad\text{and}\qquad
\psi_1:Y\to V(M)\ \text{by}\ \psi_1(y)=v\ \text{iff}\ y\in A_v.
$$
Then $\psi_1$ is well-defined (since the $A_v$ are disjoint) and surjective, and $\lifts(v)=\psi_1^{-1}(v)=A_v$.

Fix $v\in V(M)$. 
Choose a basepoint $v_*\in A_v \subseteq V(\Gamma)$. Since the induced graph $\Gamma[A_v]$ is connected, Theorem~\ref{connected-constant} implies 
$S(\Gamma[A_v])$ contains the constant function mapping every vertex in $A_v$ to $v_*$.   It follows that  the constant map $c_{v_*}$ on $A_v$  with value $v_*$ can  be written as a product of elementary collapsings corresponding to edges in the induced subgraph $\Gamma[A_v]$. This expression also gives a transformation in $S(\Gamma)$ that fixes all vertices outside $A_v$ while collapsing $A_v$ to $v_*$.

Also, for any $x,y\in A_v$, choose a path $x=w_1,w_2,\dots,w_r=y$ in $\Gamma[A_v]$ and define
$$
p_{xy}:=\tau_{w_1w_2}\tau_{w_2w_3}\cdots \tau_{w_{r-1}w_r}\in S(\Gamma[A_v]),
$$
so that $x\cdot p_{xy}=y$, and $p_{xy}$ fixes all vertices outside $A_v$.

Now let $(u,v)\in E(M)$ and consider the generator $\tau_{uv}\in S(M)$.
Choose $u'\in A_u$ and $v'\in A_v$ with $\{u',v'\}\in E(\Gamma)$.
Define
$$
\tau^*_{uv}:=c_{u_*}\,p_{u_*u'}\,\tau_{u'v'}\,p_{v'v_*}\,c_{v_*}\ \in S(\Gamma).
$$
Then $\tau^*_{uv}$ satisfies condition \textup{(A2)}: the displayed factorization contains the crossing factor
$\tau_{u'v'}$, with $u'\in \lifts(u)$ and $v'\in \lifts(v)$.

Let
$$T:=\langle\,\tau^*_{uv} : (u,v)\in E(M)\,\rangle_{\mathrm{mon}} \le S(\Gamma)$$
be the submonoid generated by these lifts. Since each $\tau^*_{uv}$ maps $A_u$ into $A_v$, maps $A_v$ into itself, and fixes each $A_w$ pointwise for $w\not\in\{u,v\}$, we have $Y\cdot T\subseteq Y$.

We now define $\psi_2:T\to S(M)$ and verify that it is a well-defined homomorphism.
Recall that $A_v=\psi_1^{-1}(v)$ for $v\in V(M)$, and that $Y\cdot T\subseteq Y$.

For $t\in T$ and $u\in V(M)$, choose any $y\in A_u$ and define
$$
u\cdot \psi_2(t)\ :=\ \psi_1(y\cdot t).
$$
To see that $\psi_2(t)$ is well-defined, we must show that $\psi_1(y\cdot t)$ depends only on $u=\psi_1(y)$,
not on the choice of $y\in A_u$.

It is enough to check the compatibility on the chosen lifts of the monoid generators of $S(M)$, including the identity; the general case then follows by induction on words, equivalently by Allen's Lemma~\ref{lem:Allen-division}.
Namely, for each edge $(u,v)\in E(M)$ we have chosen a lift $\tau^*_{uv}\in T$ of $\tau_{uv}\in S(M)$, and it suffices to verify:
for every $y\in Y$,
$$
\psi_1(y\cdot \tau^*_{uv})=\psi_1(y)\cdot \tau_{uv}.
$$
Let $y\in Y$. If $y\in A_u$, then $y\cdot\tau^*_{uv}\in A_v$, hence
$$
\psi_1(y\cdot\tau^*_{uv})=v=\psi_1(y)\cdot\tau_{uv}.
$$
If $y\in A_w$ with $w\neq u$, then $y\cdot\tau^*_{uv}\in A_w$: for $w\not\in\{u,v\}$ the factors of $\tau^*_{uv}$ all fix 
$A_w$ pointwise, while for $w=v$ the factors supported in $A_v$ may move $y$ but keep it inside $A_v$. Hence
$$
\psi_1(y\cdot\tau^*_{uv})=w=\psi_1(y)\cdot\tau_{uv}.
$$
Since $T$ is a submonoid, it contains $1_{V(\Gamma)}$, which serves as the lift of the identity of $S(M)$; clearly
$$
\psi_1(y\cdot 1_{V(\Gamma)})=\psi_1(y)=\psi_1(y)\cdot 1_{V(M)} .
$$

Therefore the compatibility condition of Lemma~\ref{lem:Allen-division}(2) holds for the chosen lifts of vertices and monoid generators, including the identity.
Consequently, $\psi_2$ is a well-defined surjective homomorphism. Indeed, $\psi_2(\tau^*_{uv})=\tau_{uv}$ for all $(u,v)\in E(M)$,
and the elements $\tau_{uv}$, together with the identity, generate the monoid $S(M)$; the identity $1_{V(M)}$  is lifted to $1_{V(\Gamma)}\in T$.
Moreover,
$$
\psi_1(y\cdot t)=\psi_1(y)\cdot\psi_2(t)\qquad (y\in Y,\ t\in T),
$$
so indeed $\flowTM(M)\preceq \flowTM(\Gamma)$ (Definition~\ref{def:division}).\\

\noindent
\textup{(A)$\Rightarrow$(G).}
Assume there is a semigroup division with $Y \subseteq V(\Gamma)$ and $T \leq S(\Gamma)$ with surjective function $\psi_1: Y\twoheadrightarrow V(M)$ and 
 surjective homomorphism $\psi_2: T\twoheadrightarrow S(M)$  satisfying \textup{(A1)}--\textup{(A2)}.
Thus  $Y\cdot T\subseteq Y$, and 
$$
\psi_1(y\cdot t)=\psi_1(y)\cdot \psi_2(t)\qquad (y\in Y,\ t\in T).
$$
For each $v\in V(M)$, set $A_v:=\lifts(v)=\psi_1^{-1}(v)\subseteq Y$. Then the $A_v$ are pairwise disjoint.

By \textup{(A1)}, $S(\Gamma_v)$ contains a constant transformation for each $v$.
By Theorem~\ref{connected-constant}, $\Gamma_v$ is connected; by Definition~\ref{def:Gamma-v}, this means precisely that
the induced subgraph $\Gamma[A_v]$ is connected. Hence the first bullet of Definition~\ref{def:graph-minor} holds for the family $\{A_v\}_{v\in V(M)}$.

Now let $(u,v)\in E(M)$. By \textup{(A2)}, there is a lift $\tau^*_{uv}$ of $\tau_{uv}$ whose factorization contains a crossing factor $\tau_{ab}$
with $a\in A_u$ and $b\in A_v$. Since $\tau_{ab}$ is an elementary collapsing in $S(\Gamma)$, we have $\{a,b\}\in E(\Gamma)$.
Thus there is an edge of $\Gamma$ with one endpoint in $A_u$ and the other in $A_v$, establishing the second bullet of
Definition~\ref{def:graph-minor}.

Therefore $\{A_v:v\in V(M)\}$ satisfies both conditions of Definition~\ref{def:graph-minor}, so $M$ is a minor of $\Gamma$.
\end{proof}

\section{Graph Minors and Embedded Flow Transformation Monoids}\label{sec:embedding}

Next in Theorem~\ref{thm:subsemigroup-copy}, we strengthen the division result of Theorem~\ref{thm:minor-division} by constructing an embedded copy $(R,T)$ of the flow transformation monoid  $\flowTM(M)$ of the minor $M$ inside the flow transformation monoid $\flowTM(\Gamma)$ of the graph $\Gamma$ satisfying certain algebraic conditions. These will guarantee a converse result that allows for  the detection of embedded transformation monoids corresponding precisely to minors of $\Gamma$ (Theorem~\ref{thm:minor-iff-embedded-copy-idem}).

\begin{theorem}[Embedded flow transformation monoid of a minor]
\label{thm:subsemigroup-copy}
Let $M$ and $\Gamma$ be simple graphs, and suppose that $M$ is realized
as a minor of $\Gamma$ via pairwise disjoint connected contraction classes
$A_x\subseteq V(\Gamma)$ for $x\in V(M)$. Choose representatives $r_x\in A_x$, and set
$$
R:=\{r_x:x\in V(M)\}.
$$
Then there exist an idempotent $e\in S(\Gamma)$ and a subsemigroup
$T\le S(\Gamma)$ such that $T$ is itself a monoid with identity $e$,
$R\cdot T\subseteq R$, and an isomorphism of transformation monoids
$$
(R,T)\cong \flowTM(M) .
$$
Moreover, for each directed edge $(x,y)$ of $M$, the element of $T$ corresponding
to $\tau_{xy}\in S(M)$ has a factorization containing an elementary collapsing
$$
\tau_{ab}
\quad\text{with}\quad
a\in A_x,\quad b\in A_y.
$$
\end{theorem}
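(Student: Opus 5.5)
Let $A:=\bigcup_{x\in V(M)}A_x$. The idempotent $e$ collapses each contraction class onto its representative: by Corollary~\ref{cor:local-constant} each $\Gamma[A_x]$ has an idempotent $c_x:=c_{A_x,r_x}\in S(\Gamma)$ sending $A_x$ to $r_x$ and fixing $V(\Gamma)\setminus A_x$. The $c_x$ have disjoint supports, so they commute, and
$$
e:=\prod_{x\in V(M)} c_x
$$
is an idempotent in $S(\Gamma)$. It maps each $A_x$ to $r_x$ and fixes every vertex outside $A$. In particular $e$ fixes $R$ pointwise.

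Next we lift the generators. For each directed edge $(x,y)$ of $M$, choose $a\in A_x$, $b\in A_y$ with $\{a,b\}\in E(\Gamma)$. As in the proof of Theorem~\ref{thm:minor-division}, choose path products $p_{r_x a}$ in $\Gamma[A_x]$ and $p_{b r_y}$ in $\Gamma[A_y]$, and set
$$
\sigma_{xy}:=e\,p_{r_xa}\,\tau_{ab}\,p_{br_y}\,e\in S(\Gamma).
$$
This factorization contains the required crossing factor $\tau_{ab}$. Now track vertices. A vertex outside $A$ is fixed by every factor. A vertex of $A_z$ is sent to $r_z$ by the first $e$. If $z\notin\{x,y\}$, it stays at $r_z$ through the middle factors. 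If $z=y$, the middle factors keep it inside $A_y$. If $z=x$, it travels $r_x\to a\to b\to r_y$. The final $e$ then lands it on a representative. Hence $\sigma_{xy}$ acts on $R$ exactly as $\tau_{xy}$ acts on $V(M)$ under $r_z\leftrightarrow z$, and $\sigma_{xy}=e\sigma_{xy}e$.

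Let $T$ be the subsemigroup generated by $e$ and the $\sigma_{xy}$. Every generator $g$ satisfies $ege=g$, so every element $t\in T$ satisfies $ete=t$, and $e$ is a two-sided identity of $T$. Thus $T$ is a monoid with identity $e$, and $R\cdot T\subseteq R$ because all generators preserve $R$.

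Define $\phi:T\to S(M)$ by restricting $t$ to $R$ and transporting along $r_z\mapsto z$. This is a monoid homomorphism that sends $e$ to $1_{V(M)}$ and $\sigma_{xy}$ to $\tau_{xy}$, so it is surjective.

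The main point is injectivity, i.e.\ that $T$ acts faithfully on $R$. For this we use $t=et$ for every $t\in T$. Any $v\in A_z$ satisfies $v\cdot e=r_z$, so $v\cdot t=r_z\cdot t$; any $v\notin A$ satisfies $v\cdot t=v$. Hence $t$ is determined on all of $V(\Gamma)$ by its restriction to $R$. Two elements of $T$ that agree on $R$ are therefore equal, and $\phi$ is injective.

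Consequently $(R,T)\cong\flowTM(M)$, with the element of $T$ corresponding to $\tau_{xy}$ being $\sigma_{xy}$, which has the crossing factor $\tau_{ab}$ with $a\in A_x$, $b\in A_y$. The only care needed is to include $e$ itself among the generators of $T$, so that $T$ contains a lift of the identity even when $M$ has no edges.
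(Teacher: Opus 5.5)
Your proposal is correct and follows essentially the same route as the paper. It uses the same class-collapse idempotent $e=\prod_x c_{A_x,r_x}$ and generator lifts sandwiched between two copies of $e$ that contain the crossing factor $\tau_{ab}$. It also takes $T$ generated by $e$ together with these lifts, and gets faithfulness from $t=et$ plus the fact that $T$ fixes $V(\Gamma)\setminus A$ pointwise.

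The differences are cosmetic. Your routing word $p_{br_y}$ is redundant, since the trailing $e$ already sends $b$ to $r_y$. Also, including $e$ as a generator is needed in general, not only when $M$ is edgeless: for a single edge, $\sigma_{xy},\sigma_{yx}$ alone generate no lift of $1_{V(M)}$.
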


\begin{proof}
\emph{Step 0: Local routing maps and the class-collapse idempotent.}
For each $x\in V(M)$, Corollary~\ref{cor:local-constant} gives a local
constant
$$
c_x:=c_{A_x,r_x}\in S(\Gamma)
$$
which sends every vertex of $A_x$ to $r_x$ and fixes every vertex outside $A_x$.
If $A_x=\{r_x\}$, then $c_x=1_{V(\Gamma)}$. Since the sets $A_x$ are pairwise
disjoint, the elements $c_x$ commute. Define
$$
e:=\prod_{x\in V(M)}c_x .
$$
Then $e$ is a class-collapse idempotent that maps each contraction class $A_x$ to $r_x$, and fixes
$$
C:=V(\Gamma)\setminus\bigcup_{x\in V(M)}A_x
$$
pointwise.

Since $\Gamma[A_x]$ is connected, for $u,v\in A_x$ there is a path $w_0 = u, w_1, \cdots, w_k = v$ in $A_x$. The product of the elementary collapsings along this path 
$$
p^{(x)}_{u\to v} = \tau_{w_0w_1} \tau_{w_1w_2} \cdots \tau_{w_{k-1}w_k} \in S(\Gamma)
$$
acts as $w \cdot p^{(x)}_{u\to v} = v$ for $w \in \{w_0, \dots , w_k\}$ and $w \cdot p^{(x)}_{u\to v} = w$ otherwise. We call such a product a local routing word from $u$ to $v$ inside $A_x$. 

\emph{Step 1: Lift the generators of $S(M)$.}
For each directed edge $(x,y)$ of $M$, choose a crossing edge
$$
\{a_{xy},b_{xy}\}\in E(\Gamma),
\qquad
a_{xy}\in A_x,\quad b_{xy}\in A_y,
$$
and define
$$
\gamma_{xy}:=e\,p^{(x)}_{r_x\to a_{xy}}\,\tau_{a_{xy}b_{xy}}\,e\in S(\Gamma).
$$
Since $\gamma_{xy}\in eS(\Gamma)e$, we have
$$
e\gamma_{xy}=\gamma_{xy}e=\gamma_{xy}.
$$

We compute the action of $\gamma_{xy}$ on $R$. Starting at $r_x$, the leftmost
$e$ fixes $r_x$, the routing word sends $r_x$ to $a_{xy}$, the crossing
collapsing sends $a_{xy}$ to $b_{xy}\in A_y$, and the final $e$ sends $b_{xy}$
to $r_y$. Thus
$$
r_x\cdot \gamma_{xy}=r_y.
$$
If $z\neq x$, then $r_z\notin A_x$, so the routing word fixes $r_z$, the crossing
collapsing fixes $r_z$, and both copies of $e$ fix $r_z$. Hence
$$
r_z\cdot \gamma_{xy}=r_z\qquad (z\neq x).
$$
Therefore, on $R$, $\gamma_{xy}$ acts as the elementary collapsing
$$
\tau_{r_xr_y}.
$$

\emph{Step 2: Generate the embedded local monoid.}
Let
$$
T:=\langle e,\gamma_{xy}:(x,y)\in E(M)\rangle\le S(\Gamma)
$$
be the subsemigroup generated by $e$ and these lifted edge-collapsings. Since $e$ is
a two-sided identity for each generator $\gamma_{xy}$, it is the identity element of $T$.
Thus $T$ is itself a monoid, with identity $e$. Also, each generator maps $R$ into
$R$, so $R\cdot T\subseteq R$.

{\it Step 3:} We now check that the action of $T$ on $R$ is exactly the flow monoid action of $S(M)$
on $V(M)$, under the bijection $x\mapsto r_x$, and that this action is faithful. The element $e$ acts as the identity on $R$, while each $\gamma_{xy}$
acts as $\tau_{r_xr_y}$. Hence, under the bijection
$$
V(M)\rightarrow R,\qquad x\mapsto r_x,
$$
the induced action of $T$ on $R$ is generated by the identity on $R$ and the elementary
collapsings $\tau_{r_xr_y}$ corresponding to the directed edges $(x,y)$ of $M$. Hence
this induced action is precisely the flow monoid action of $S(M)$, transported to $R$.

{\it Step 4:} It remains only to prove faithfulness of the action of $T$ on $R$. Every generator $\gamma_{xy}$ fixes $C$
pointwise: the idempotent $e$ fixes $C$, the routing word is supported in $A_x$,
and the crossing factor has both endpoints in $A_x\cup A_y$. The idempotent $e$
also fixes $C$ pointwise. Hence every $t\in T$ fixes $C$ pointwise.

Suppose $t_1,t_2\in T$ agree on $R$. Since $e$ is the identity of $T$, we have
$et_i=t_i$ for $i=1,2$.\\
 If $v\in A_x$ for some $x\in V(M)$, then $v\cdot e=r_x$, and so
$$
v\cdot t_i
=
v\cdot et_i
=
(v\cdot e)\cdot t_i
=
r_x\cdot t_i .
$$
Since $t_1$ and $t_2$ agree on $R$, this gives
$$
v\cdot t_1=r_x\cdot t_1=r_x\cdot t_2=v\cdot t_2.
$$
Otherwise, $v\in C$, and both $t_1$ and $t_2$ fix $v$. Thus $t_1$ and $t_2$ agree
on all of $V(\Gamma)$, i.e., $t_1 = t_2$. Hence the action of $T$ on $R$ is faithful.

Therefore, we have an isomorphism of transformation monoids
$$
(R,T)\cong \flowTM(M).
$$
The displayed definition of $\gamma_{xy}$ gives the required factorization containing
the crossing factor $\tau_{a_{xy}b_{xy}}$.
\end{proof}

\begin{remark}\label{rem:constructed-copy}
The construction in Theorem~\ref{thm:subsemigroup-copy} depends on the chosen
realization of $M$ as a minor of $\Gamma$, on the representatives, and on the chosen
crossing edges. Its conclusion is stronger than division: it gives an embedded copy
$$
(R,T)\cong \flowTM(M)
$$
inside the ambient flow transformation monoid $\flowTM(\Gamma)$. The acting
semigroup $T\le S(\Gamma)$ is a subsemigroup which is itself a monoid, with local
identity $e$, not necessarily the ambient identity $1_{V(\Gamma)}$.
\end{remark}

For $F\subseteq V(\Gamma)$, let $S_\Gamma(F)$ denote the monoid generated by the
restrictions to $F$ of those elementary collapsings $\tau_{ab}$ with $a,b\in F$ that
occur among the generators of $S(\Gamma)$. Equivalently, $S_\Gamma(F)$ is the flow
monoid of the induced graph $\Gamma[F]$, defined via the elementary collapsings inherited
from $S(\Gamma)$.

\begin{theorem}[Embedded-copy characterization of minors]
\label{thm:minor-iff-embedded-copy-idem}
Let $M$ and $\Gamma$ be simple graphs. Then the following geometric and algebraic
conditions are equivalent:

\begin{itemize}
\item[\textup{(G)}] $M$ is a minor of $\Gamma$.
\item[\textup{(A)}] There exist an idempotent $e\in S(\Gamma)$, a subsemigroup
$T\leq S(\Gamma)$ with identity element $e$, and a subset
$R\subseteq\operatorname{Im}(e)$,
such that the following hold:
\begin{itemize}
\item[\textup{(i)}] $(R,T)\cong \flowTM(M)$ via a state bijection
$\psi_1:R\rightarrow V(M)$ and monoid isomorphism $\psi_2:T\rightarrow S(M)$;

\item[\textup{(ii)}] for each $x\in V(M)$, writing
$$
r_x:=\psi_1^{-1}(x)
\qquad\text{and}\qquad
F_x:=e^{-1}(r_x),
$$
the monoid $S_\Gamma(F_x)$ contains a constant transformation;

\item[\textup{(iii)}] for every directed edge $(x,y)$ of $M$, the element
$$
\gamma_{xy}:=\psi_2^{-1}(\tau_{xy})\in T
$$
has a factorization into elementary collapsings from the generating set of
$S(\Gamma)$ containing a factor $\tau_{ab}$ with
$$
a\in F_x,
\qquad
b\in F_y.
$$
\end{itemize}
\end{itemize}
\end{theorem}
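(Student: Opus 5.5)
The plan is to prove the two implications separately. The forward direction comes almost entirely from Theorem~\ref{thm:subsemigroup-copy}; the reverse direction reads the minor model off the fibres of $e$, as in the (A)$\Rightarrow$(G) part of Theorem~\ref{thm:minor-division}.

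\textup{(G)$\Rightarrow$(A).} Fix a realization of $M$ with connected contraction classes $A_x$ and representatives $r_x\in A_x$. Then take $e$, $T$, $R$ and the generators $\gamma_{xy}$ exactly as constructed in Theorem~\ref{thm:subsemigroup-copy}.

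\begin{itemize}
\item Since $e$ maps $A_x$ to $r_x$ and fixes the complement $C$ pointwise, each $r_x=r_x\cdot e$ lies in $\operatorname{Im}(e)$, so $R\subseteq\operatorname{Im}(e)$.
\item Condition (i) is the isomorphism $(R,T)\cong\flowTM(M)$ given there, with $\psi_1(r_x)=x$. The monoid isomorphism $\psi_2$ is induced by faithfulness and sends $\gamma_{xy}$ to $\tau_{xy}$.
\item For (ii), compute the fibre $F_x=e^{-1}(r_x)$. A vertex of $C$ is fixed by $e$, and $r_x\notin C$, so no vertex of $C$ lies in $F_x$. A vertex of $A_z$ with $z\neq x$ maps to $r_z\neq r_x$. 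Hence $F_x=A_x$ exactly.
\item Since $\Gamma[A_x]$ is connected, Theorem~\ref{connected-constant} applied to $\Gamma[F_x]$ shows that $S_\Gamma(F_x)$ contains a constant map.
\item For (iii), $\psi_2^{-1}(\tau_{xy})$ is the element $\gamma_{xy}=e\,p^{(x)}_{r_x\to a_{xy}}\,\tau_{a_{xy}b_{xy}}\,e$. Because $\psi_2$ is injective, this is the unique preimage. Writing $e$ as a product of the local constants $c_x$, which are themselves products of generators, gives a factorization containing $\tau_{a_{xy}b_{xy}}$ with $a_{xy}\in F_x$ and $b_{xy}\in F_y$.
\end{itemize}

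\textup{(A)$\Rightarrow$(G).} Set $A_x:=F_x=e^{-1}(r_x)$ for $x\in V(M)$.

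\begin{itemize}
\item The $r_x$ are distinct because $\psi_1$ is a bijection. So the $A_x$ are fibres of $e$ over distinct points, hence pairwise disjoint.
\item Each $A_x$ is nonempty: $r_x\in\operatorname{Im}(e)$ and $e$ is idempotent, so $r_x\cdot e=r_x$ and $r_x\in A_x$.
\item By (ii), $S_\Gamma(F_x)=S(\Gamma[F_x])$ contains a constant transformation. Theorem~\ref{connected-constant} then gives that $\Gamma[A_x]$ is connected.
\item By (iii), for every edge $(x,y)$ of $M$ some factor $\tau_{ab}$ of a factorization of $\gamma_{xy}$ has $a\in A_x$ and $b\in A_y$. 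Since $\tau_{ab}$ is a generator of $S(\Gamma)$, $\{a,b\}\in E(\Gamma)$.
\end{itemize}

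Together these are exactly the two bullets of Definition~\ref{def:graph-minor}, so $M$ is a minor of $\Gamma$.

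The main point needing care is the identification $F_x=A_x$ in the forward direction, which rests on $e$ fixing $C$ pointwise and on the distinctness of the $r_x$. A second point is to make sure the factorization of $\gamma_{xy}$ in (iii) uses only generators of $S(\Gamma)$. This holds because $e$ and the routing words are explicit products of elementary collapsings along edges of $\Gamma$.

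In the reverse direction the only possible issue is degenerate cases: a loop-free edge $(x,y)$ in $M$ forces $x\neq y$, so the crossing edge really joins two distinct classes. Nonemptiness of $A_x$, established above, rules out the remaining degenerate case.
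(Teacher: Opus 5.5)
Your proposal is correct and follows essentially the same route as the paper: the forward direction applies Theorem~\ref{thm:subsemigroup-copy} and identifies the fibres $F_x=e^{-1}(r_x)$ with the contraction classes $A_x$, and the reverse direction reads the minor model off the fibres of $e$ using (ii), (iii) and Theorem~\ref{connected-constant}. Your explicit checks that $R\subseteq\operatorname{Im}(e)$ and that $F_x=A_x$ (because $e$ fixes $C$ pointwise and the classes are disjoint) fill in details that the paper only asserts.
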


\begin{proof}
\textup{(G)$\Rightarrow$(A).}
Assume that $M$ is a minor of $\Gamma$. The proof in this direction is essentially the application
Theorem~\ref{thm:subsemigroup-copy} for a realization of $M$ as a minor of
$\Gamma$. Let $e$, $R$, and $T$ be the idempotent, representative set, and embedded
local monoid constructed there. 

Let
$$
\psi_1:R\rightarrow V(M)
$$
be the state bijection defined by $\psi_1(r_x)=x$. Theorem~\ref{thm:subsemigroup-copy}
gives an isomorphism of transformation monoids
$
(R,T)\cong \flowTM(M).
$
Let
$
\psi_2:T\rightarrow S(M)
$
be the corresponding monoid isomorphism. Thus condition \textup{(i)} holds.

For each $x\in V(M)$, we have $r_x=\psi_1^{-1}(x)$, and the fibre
$F_x=e^{-1}(r_x)$ 
is exactly the contraction class $A_x$ in the chosen realization. Hence
$\Gamma[F_x]$ is connected, and therefore $S_\Gamma(F_x)$ contains a constant
transformation by Theorem~\ref{connected-constant}. Thus condition \textup{(ii)}
holds.

Finally, let $(x,y)$ be a directed edge of $M$. In the construction of
Theorem~\ref{thm:subsemigroup-copy}, the element of $T$ corresponding to
$\tau_{xy}\in S(M)$ is
$\gamma_{xy}=\psi_2^{-1}(\tau_{xy}).$
That theorem gives a factorization of $\gamma_{xy}$ into elementary collapsings from
the generating set of $S(\Gamma)$ containing a crossing factor
$\tau_{ab}$
with $a\in A_x=F_x, \ b\in A_y=F_y$. Thus condition \textup{(iii)} holds.

\smallskip
\noindent
\textup{(A)$\Rightarrow$(G).}
Assume \textup{(A)}. For each $x\in V(M)$, write
$$
r_x:=\psi_1^{-1}(x)
\qquad\text{and}\qquad
F_x:=e^{-1}(r_x).
$$
The sets $F_x$ are pairwise disjoint since they are fibres of the map $e$. They are
nonempty since $r_x\in R\subseteq\operatorname{Im}(e)$ and $e$ is idempotent:
indeed, $r_x\cdot e=r_x$, so $r_x\in F_x$. By condition \textup{(ii)} and
Theorem~\ref{connected-constant}, each induced subgraph $\Gamma[F_x]$ is connected.

Now let $(x,y)$ be a directed edge of $M$. By condition \textup{(iii)}, the element
$$
\gamma_{xy}:=\psi_2^{-1}(\tau_{xy})\in T
$$
has a factorization into elementary collapsings from the generating set of $S(\Gamma)$
containing a factor
$$
\tau_{ab}
\quad\text{with}\quad
a\in F_x,\qquad b\in F_y.
$$
Since this factor $\tau_{ab}$ is one of the generating elementary collapsings of
$S(\Gamma)$, we have $\{a,b\}\in E(\Gamma)$. Hence there is an edge of $\Gamma$
between $F_x$ and $F_y$.

Therefore the pairwise disjoint connected sets $F_x\subseteq V(\Gamma)$, for
$x\in V(M)$, are contraction classes realizing $M$ as a minor of $\Gamma$.

Equivalently, the same hypotheses determine division data of the kind used in
Theorem~\ref{thm:minor-division}; the argument above spells out directly the resulting
contraction classes.
\end{proof}

\begin{corollary}[Choosing the idempotent with image size $|V(M)|$ in the connected case]
\label{cor:connected-image-R}
Assume $\Gamma$ is connected and $M$ is a minor of $\Gamma$. Then the idempotent
$e$ in Theorem~\ref{thm:minor-iff-embedded-copy-idem} may be chosen so that
$\operatorname{Im}(e)=R.$
Equivalently, the fibres $e^{-1}(r)$, $r\in R$, form a partition of $V(\Gamma)$. In this case, $|\operatorname{Im}(e)|=|V(M)|$.
\end{corollary}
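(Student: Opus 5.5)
The plan is to modify the realization of $M$ as a minor of $\Gamma$ so that the contraction classes cover all of $V(\Gamma)$, and then apply Theorem~\ref{thm:subsemigroup-copy} to the enlarged realization. With this choice the leftover set $C=V(\Gamma)\setminus\bigcup_x A_x$ is empty. The class-collapse idempotent $e=\prod_x c_x$ then sends every vertex of $\Gamma$ into $R=\{r_x\}$ and fixes each $r_x$. Hence $\operatorname{Im}(e)=R$, the fibres $e^{-1}(r_x)=A_x$ partition $V(\Gamma)$, and $|\operatorname{Im}(e)|=|R|=|V(M)|$. The verification of conditions (i)--(iii) in the proof of (G)$\Rightarrow$(A) of Theorem~\ref{thm:minor-iff-embedded-copy-idem} goes through verbatim. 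It uses only that the $A_x$ are pairwise disjoint and connected and that crossing edges exist, and all of these properties are preserved by the enlargement.

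The key step is the enlargement lemma. If $\Gamma$ is connected and $\{A_x\}_{x\in V(M)}$ realizes $M$ as a minor, with $V(M)$ nonempty, then there are classes $A'_x\supseteq A_x$ that are pairwise disjoint, each $\Gamma[A'_x]$ is connected, and $\bigcup_x A'_x=V(\Gamma)$. I would prove it by induction on $|C|$. If $C\neq\emptyset$, then since $\Gamma$ is connected and $\bigcup_x A_x\neq\emptyset$, there is an edge $\{c,a\}\in E(\Gamma)$ with $c\in C$ and $a\in A_x$ for some $x$; such an edge lies on any path from $C$ to a class. Replace $A_x$ by $A_x\cup\{c\}$. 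The induced subgraph stays connected because $c$ is adjacent to $a\in A_x$. Disjointness is preserved because $c$ lay in no class. Crossing edges are preserved because the classes only grow. Since $|C|$ strictly decreases, the process terminates with $C=\emptyset$.

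The only subtle points are the edge cases, and these are where I expect the main obstacle. The statement should be read as "some realization yields such an $e$", so one must note that the idempotent in Theorem~\ref{thm:minor-iff-embedded-copy-idem} is allowed to come from any realization; this is what licenses the enlargement. If $V(M)=\emptyset$, the claim degenerates, since the empty product would have to be constant with empty image, which is impossible for nonempty $V(\Gamma)$. I would exclude this case by assuming $M$ is nonempty, as is implicit when $|\operatorname{Im}(e)|=|V(M)|$ is asserted.

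Finally, I would remark that the equivalence "$\operatorname{Im}(e)=R$ iff the fibres over $R$ partition $V(\Gamma)$" holds for any map whose image contains $R$. In particular it holds for $e$, whose image contains $R$ because $r\cdot e=r$ for each $r\in R$.
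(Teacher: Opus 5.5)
Your proposal is correct and follows the paper's proof: enlarge the contraction classes so that they cover $V(\Gamma)$, then apply Theorem~\ref{thm:subsemigroup-copy} to the enlarged realization, so that $e$ collapses each enlarged class onto its representative and $\operatorname{Im}(e)=R$. The only difference is in how the covering classes are built: you grow them one adjacent leftover vertex at a time by induction on $|C|$, whereas the paper assigns all leftover vertices at once via a spanning forest rooted at $\bigcup_x A_x$.
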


\begin{proof}
Fix a realization of $M$ as a minor of $\Gamma$, given by pairwise disjoint nonempty
connected sets $A_x\subseteq V(\Gamma)$ for $x\in V(M)$. Let
$$
A:=\bigcup_{x\in V(M)}A_x .
$$
Since $\Gamma$ is connected, every vertex of $V(\Gamma)\setminus A$ is connected
to $A$ by a path. Choose a forest $F$ in $\Gamma$, rooted at the vertices of $A$,
spanning $V(\Gamma)$, such that every vertex of $V(\Gamma)\setminus A$ has a unique
path in $F$ to a vertex of $A$. For each $x\in V(M)$, define
$$
A'_x
:=
A_x\cup
\{v\in V(\Gamma)\setminus A:\text{ the path in $F$ from $v$ to $A$ ends in }A_x\}.
$$
Then the sets $A'_x$ are pairwise disjoint, cover $V(\Gamma)$, and each induced
subgraph $\Gamma[A'_x]$ is connected. The original crossing edges between the $A_x$
remain crossing edges between the enlarged sets $A'_x$.

Retain the chosen representatives $r_x\in A_x\subseteq A'_x$. For each $x$, choose
a local constant idempotent $e_x\in S(\Gamma)$ which maps every vertex of $A'_x$
to $r_x$ and fixes every vertex outside $A'_x$. Since the sets $A'_x$ are disjoint,
the idempotents $e_x$ commute, and their product
$$
e:=\prod_{x\in V(M)}e_x
$$
is an idempotent. It collapses each $A'_x$ to $r_x$, and since the sets $A'_x$
cover $V(\Gamma)$, it has image
$$
\operatorname{Im}(e)=\{r_x:x\in V(M)\}=R.
$$
Applying Theorem~\ref{thm:subsemigroup-copy} to this enlarged realization gives the
required embedded local copy with $\operatorname{Im}(e)=R$, which has $|V(M)|$ elements.
\end{proof}
\subsection*{Acknowledgments}
\phantomsection
\addcontentsline{toc}{subsection}{Acknowledgments}
\label{Acknowledgements}

We gratefully acknowledge the support of the Natural Sciences and Engineering Research Council of Canada (NSERC), funding reference number RGPIN-2019-04669. 
Cette recherche a \'et\'e financ\'ee par le Conseil de recherches en sciences naturelles et en g\'enie du Canada (CRSNG), num\'ero de r\'ef\'erence RGPIN-2019-04669.  We thank alumni of the Waterloo Algebraic Intelligence \& Computation Lab, especially Marta Zheplinska and Meenal Gupta, for their involvement in our early discussions of algebraic properties of  graphs and their minors.

\bibliographystyle{eptcs}
\bibliography{ref}

\end{document}